\newtheorem*{theorem*}{Theorem}
\newtheorem{fact}{Fact}
\newcommand{\dist}{\mbox{\rm dist}}
\newcommand{\lpath}{\textsc{$\ell$-path of $k$-cliques}\xspace}
\newcommand{\local}{\textsf{LOCAL}\xspace}
\newcommand{\id}{\mbox{\rm id}}
\newcommand{\Ga}{$G_1^{\{\ell,k\}}$}
\newcommand{\Gb}{$G_2^{\{\ell,k\}}$}
\newcounter{mremarkctr}
\newtheorem{lemma}{Lemma}
\newtheorem{observation}{Observation}
\title{Generalizing Brooks' theorem via Partial Coloring is Hard Classically and Locally}
\author{
Jan Bok\thanks{Department of Algebra, Faculty of Mathematics and Physics, Charles University. Email: {\tt jan.bok@matfyz.cuni.cz}. Website: \url{https://janbok.github.io}. ORCID id: \url{https://orcid.org/0000-0002-7973-1361}.  Supported by the European Union (ERC, POCOCOP, 101071674). Views and opinions expressed are however those of the author(s) only and do not necessarily reflect those of the European Union or the European Research Council Executive Agency. Neither the European Union nor the granting authority can be held responsible for them.}
\and
Avinandan Das\thanks{Department of Computer Science, Aalto University. Email: {\tt avinandan.das@aalto.fi}. This work was supported in part by the Research Council of Finland, Grant 363558.} \\
\and
Anna Gujgiczer\thanks{MTA--HUN-REN RI Lend\"ulet ``Momentum'' Arithmetic Combinatorics Research Group, HUN-REN Alfréd Rényi Institute of Mathematics, Budapest, Hungary. Email: {\tt gujgicza@renyi.hu}. Website: \url{https://cs.bme.hu/~gujgicza/}. Supported by the Lend\"ulet ``Momentum'' program of the Hungarian Academy of Sciences (MTA).}
\and
Nikola Jedličková\thanks{Department of Applied Mathematics, 
Faculty of Mathematics and Physics, Charles University. \\  Department of Algebra, Faculty of Mathematics and Physics, Charles University. Email: {\tt jedlickova@kam.mff.cuni.cz}. ORCID id: \url{https://orcid.org/0000-0001-9518-6386}. Supported by the research grants PRIMUS/24/SCI/008, UNCE/24/SCI/022 of Charles University, and SVV–2025–260822. }
}
\date{}
\begin{document}
\maketitle

\begin{abstract}
We investigate the classical and distributed complexity of \emph{$k$-partial $c$-coloring} where $c=k$, a natural generalization of Brooks' theorem where each vertex should be colored from the palette $\{1,\ldots,c\} = \{1,\ldots,k\}$ such that it must have at least $\min\{k, \deg(v)\}$ neighbors colored differently. Das, Fraigniaud, and Ros{\'{e}}n~[OPODIS 2023] showed that the problem of $k$-partial $(k+1)$-coloring admits efficient centralized and distributed algorithms and posed an open problem about the status of the distributed complexity of $k$-partial $k$-coloring. We show that the problem becomes significantly harder when the number of colors is reduced from $k+1$ to $k$ for every constant $k\geq 3$.

In the classical setting, we prove that deciding whether a graph admits a $k$-partial $k$-coloring is NP-complete for every constant $k \geq 3$, revealing a sharp contrast with the linear-time solvable $(k+1)$-color case. For the distributed LOCAL model, we establish an $\Omega(n)$-round lower bound for computing $k$-partial $k$-colorings, even when the graph is guaranteed to be $k$-partial $k$-colorable. This demonstrates an exponential separation from the $O(\log^2 k \cdot \log n)$-round algorithms known for $(k+1)$-colorings.

Our results leverage novel structural characterizations of ``hard instances'' where partial coloring reduces to proper coloring, and we construct intricate graph gadgets to prove lower bounds via indistinguishability arguments. 
\end{abstract}
\newpage
\section{Introduction}
\label{sec:intro}


Every graph of maximum degree $\Delta$ can be properly colored with at most $\Delta+1$ colors (with a simple greedy algorithm). The famous \emph{Brooks' theorem} \cite{brooks1941colouring} states that $\Delta$ colors are enough for almost all graphs, except for two cases, complete graphs and odd cycles, where $\Delta+1$ colors are needed. This means that -- except for those two cases -- every vertex $v$ is colored from the palette $\{1,\ldots, \Delta\}$ in such a way that each of its $\deg(v)$ neighbors receives different colors than $v$.

\subsection{Partial Coloring}

A natural generalization of proper coloring is \emph{partial coloring}~\cite{BalliuHLOS19,Brandt19,DBLP:conf/spaa/Kuhn09,NaorS95}. Given two integers~$k \geq 0$ and~$c \geq 1$, a \emph{$k$-partial $c$-coloring} of a graph assigns to each vertex a color from $\{1, \dots, c\}$ such that every vertex~$v$ has at least $\min\{k, \deg(v)\}$ neighbors with a color different from its own.

This notion generalizes several well-known coloring paradigms:
\begin{itemize}
    \item For $k = \Delta$, where $\Delta$ is the maximum degree of the graph, $k$-partial $(k+1)$-coloring corresponds to proper $(\Delta + 1)$-coloring and $k$-partial $k$-coloring corresponds to proper $\Delta$-coloring.
    \item For $k = 1$, the condition captures the classical notion of \emph{weak coloring}~\cite{NaorS95}.
\end{itemize}

The problems of proper $(\Delta+1)$-coloring~\cite{AwerbuchGLP89,PanconesiS96,RozhonG20,GhaffariK21,DBLP:conf/stoc/0001G23,BarenboimEG18,FraigniaudHK16,MausT20,GG24}, proper $\Delta$-coloring~\cite{PanconesiS95, GhaffariK21, FischerHM23, Bourreau2025, GhaffariHKM18} and \emph{weak} coloring~\cite{BalliuHLOS19, Brandt19, NaorS95, DBLP:conf/spaa/Kuhn09} have been studied extensively in the distributed computing paradigm. The notion of partial coloring brings all these problems in the same framework.

For every integer $k$ and every graph~$G$, a $k$-partial $(k+1)$-coloring always exists. As outlined in~\cite{Das} and implicitly in~\cite{BalliuHLOS19}, a simple greedy algorithm achieves this: initialize all vertices with color~1, and process the vertices sequentially. For each vertex~$v$, let $C(v)$ denote the set of colors in its neighborhood. If $|C(v)| = k+1$, the vertex retains its color. Otherwise, we color it with any color from $\{1,\dots,k+1\} \setminus C(v)$. This construction guarantees that each vertex has enough differently colored neighbors at the end of the process. Moreover, once two adjacent vertices are assigned different colors during any stage of the algorithm, they permanently maintain this property.

From a distributed perspective, Das, Fraigniaud, and Ros{\'{e}}n~\cite{Das} studied the complexity of computing such a coloring in the \local model. They designed an $O(\log n \cdot \log^2 k)$-round algorithm for $k$-partial $(k+1)$-coloring using the rounding framework from~\cite{GhaffariK21} which matches their round complexity of $O(\log n\cdot \log^2\Delta)$ for proper $(\Delta+1)$-coloring. The result suggested that ``relaxing'' the constraints of proper coloring to partial coloring made the problem ``easier'', as demonstrated by the improvement of the round complexity.
They also posed an open problem; what is the distributed complexity of $k$-partial $k$-coloring? The problem of proper $\Delta$-coloring admits polylogarithmic round complexity (\cite{GhaffariK21, Bourreau2025}) in the deterministic LOCAL model. This question can also be viewed as whether this relaxation  extends computation advantage to proper $\Delta$-coloring as well.

While proper $\Delta$-coloring is well understood in the classical (centralized) setting -- with known characterizations and complexity classifications --- the same is not true for $k$-partially $k$-colorable graphs. To the best of our knowledge, no characterization or complexity classification is known for the decision problem: given a graph $G$ and integer $k$, is $G$ $k$-partially $k$-colorable? The problem is far from being trivial. For example, our quest for characterization of $k$-partial $k$-colorability quickly yielded non-trivial obstructions (which, for instance, do not even have $(k+1)$-clique as a subgraph, see Figure~\ref{fig:nontrivial-example} for such an example).

In this work, we initiate a systematic study of this problem. We first investigate the classical complexity of recognizing $k$-partially $k$-colorable graphs. Then, leveraging structural insights, we develop distributed lower bounds for computing $k$-partial $k$-colorings in graphs that are guaranteed to admit such a coloring.

For distributed computing, we work in the standard \local model of distributed computing~\cite{Linial92}. In this model, the input graph represents a communication network where each node is a computational entity. The nodes communicate in synchronous rounds, exchanging unbounded-sized messages with their neighbors. Each node has a unique identifier (typically from a polynomial-size range) and knows only its own \id \ and the size of the universe of the \id s at the beginning of the computation. The complexity of a distributed algorithm is measured in the number of rounds until all nodes produce their output. In our context, each node must eventually decide its color based on the information available in its local neighborhood.

A more detailed discussion of the \local model is provided in Section~\ref{section:distributed local model}.

\section{Related Work}

\paragraph*{Proper $(\Delta+1)$-Coloring.}

For many years, the fastest known deterministic algorithms for proper $(\Delta+1)$-coloring in graphs with maximum degree~$\Delta$ required essentially $2^{O(\sqrt{\log n})}$ rounds in $n$-node networks~\cite{AwerbuchGLP89,PanconesiS96}. This long-standing upper bound was only recently surpassed when a polylogarithmic-round algorithm for $(\Delta+1)$-coloring was introduced~\cite{RozhonG20}. All prior algorithms relied on a particular type of graph decomposition, and the breakthrough in~\cite{RozhonG20} demonstrated how to efficiently compute this decomposition within a polylogarithmic number of rounds.

A major advance followed in 2020, when it was shown that $(\Delta+1)$-coloring can be achieved in $O(\log n \cdot \log^2 \Delta)$ rounds~\cite{GhaffariK21}, this time without graph decomposition, by using rounding of a fractional solution. For large~$\Delta$, this algorithm still runs in $O(\log^3 n)$ rounds, which was subsequently improved to~$\tilde{O}(\log^2 n)$~\cite{DBLP:conf/stoc/0001G23} and more recently to~$\tilde{O}(\log^{5/3} n)$~\cite{GG24}.

The problem of proper $(\Delta+1)$-coloring has also been studied exclusively with respect to round complexity as a function of the parameter $\Delta$. This line of work has yielded algorithms for $(\Delta+1)$-coloring that run in $\widetilde{O}(\sqrt{\Delta}) + O(\log^* n)$ rounds~\cite{BarenboimEG18,FraigniaudHK16,MausT20}. However, these are efficient only when~$\Delta$ is small --- specifically, their running times are polylogarithmic in~$n$ only if~$\Delta$ itself grows polylogarithmically with~$n$.

\paragraph*{Proper $\Delta$-Coloring.}

The problem of proper $\Delta$-coloring has been extensively studied in the distributed setting. The work of~\cite{PanconesiS95} established the \emph{distributed Brooks' theorem}, showing that if all but one vertex of the graph are already $\Delta$-colored, the coloring can be extended by recoloring a path of length $O(\log_\Delta n)$ from that vertex, leading to a distributed $\Delta$-coloring algorithm with complexity $O(\Delta \cdot \text{poly} \log n)$ and a randomized version with complexity $O(\log^3 n / \log \Delta)$ rounds. Subsequent work has significantly improved these bounds. The work of~\cite{GhaffariHKM18} presented a randomized algorithm with round complexity $O(\log \Delta) + 2^{O(\sqrt{\log \log n})}$ for $\Delta \geq 4$. Later,~\cite{GhaffariK21} gave the first deterministic polylogarithmic-time algorithm for all $\Delta \geq 3$, with round complexity $O(\log^2 \Delta \log^2 n)$. The work of~\cite{FischerHM23} designed the first randomized $\text{poly}\log\log n$-round \local algorithm for the problem. Most recently, Bourreau et al.~\cite{Bourreau2025} obtained an $O(\log^4 \Delta + \log^2 \Delta \log n \log^* n)$ round deterministic \local algorithm for $\Delta \geq 3$, which remains the state of the art for the problem.

\paragraph*{Partial Coloring.}

Under certain restrictions, efficient algorithms and lower bounds for partial coloring are known. For example, in (constant) $\Delta$-regular graphs, there exists a deterministic algorithm that computes a $k$-partial 3-coloring in $O(\log^\star n)$ rounds whenever $\Delta \geq 3k - 4$ and $k \geq 3$, and a $k$-partial $k$-coloring under the condition $\Delta \geq k + 2$ and $k \geq 4$ (see~\cite{BalliuHLOS19}). The same work also establishes lower bounds: computing a 2-partial 2-coloring in $\Delta$-regular graphs requires $\Omega(\log n)$ rounds deterministically and $\Omega(\log\log n)$ rounds randomized, for any $\Delta \geq 2$. More generally, any $k$-partial $c$-coloring in $\Delta$-regular graphs with $k \geq \frac{\Delta (c-1)}{c} + 1$ also requires $\Omega(\log n)$ deterministic rounds and $\Omega(\log\log n)$ randomized rounds~\cite{BalliuHLOS19}. Additionally, Brandt~\cite{Brandt19} recently proved that 1-partial 2-coloring in odd-degree graphs cannot be solved in $o(\log^\star \Delta)$ rounds, matching the 30-year-old upper bound from~\cite{NaorS95}. Finally, both $k$-partial $O(k^2)$-coloring and $d$-defective $O({\Delta}^2/d^2)$-coloring can be computed in $O(\log^\star n)$ rounds~\cite{DBLP:conf/spaa/Kuhn09}.

\section{Our Results}

The problem of $k$-partial $k$-coloring has been studied in the restrictive settings by~\cite{BalliuHLOS19} where they showed that for constant degree  $d$-regular graphs, a $k$-partial $k$-coloring can be computed in $O(\log^* n)$ rounds provided that $d\geq k+2$ and $k\geq 4$. Contrary to these results and the existence of efficient round complexity for $k$-partial $(k+1)$-coloring in general graphs, our results show that the problem of $k$-partial $k$-coloring becomes ``hard'' in the classical (centralized) setting and ``global'' in the local distributed setting for general (constant degree) graphs for every fixed integer $k\geq 3$. In Section~\ref{section:hard instances of partial coloring}, we characterize the hard instances of partial coloring which have to be properly colored even when relaxed to partial coloring restrictions. We leverage such structure to prove the next two theorems which form the crux of our contribution.  

\paragraph*{Classical Complexity.}

We proved that determining whether a given graph is $k$-partialy $k$-colorable is NP-complete for every constant $k\geq 3$, revealing a sharp threshold between the problem of $k$-partial $(k+1)$-coloring which admits a simple $O(n)$ step greedy algorithm.

\begin{restatable}{thm}{thmNPc}
\label{theorem:NP-completeness}
For every fixed integer $k\geq3$, the problem of deciding whether an input graph $G$ is $k$-partially $k$-colorable is NP-complete. 
\end{restatable}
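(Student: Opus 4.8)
The plan is to show membership in NP, which is immediate --- a $k$-partial $k$-coloring is itself a polynomial-size certificate, and verifying, for each vertex $v$, that at least $\min\{k,\deg(v)\}$ of its neighbors carry a color different from $c(v)$ takes polynomial time --- and then to prove NP-hardness by a polynomial-time reduction from proper $k$-coloring, which is NP-complete for every fixed $k\ge 3$. The idea is to transform an arbitrary graph $G$ (the $k$-coloring instance) into a graph $G'$ that belongs to the family of \emph{hard instances} isolated in Section~\ref{section:hard instances of partial coloring}: graphs in which any $k$-partial $k$-coloring, read off on a designated core, is forced to be a genuine proper coloring of that core. Concretely, I would keep one core vertex $x_v$ for each $v\in V(G)$ and replace each edge $uv\in E(G)$ by an \emph{inequality gadget} joining $x_u$ and $x_v$, engineered so that in every $k$-partial $k$-coloring of $G'$ one has $c(x_u)\ne c(x_v)$, while the gadget can always be extended to a valid $k$-partial $k$-coloring once the distinct pair of port colors is fixed.

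Building the inequality gadget is where the structural results do the work. The raw material is near-complete graphs: a copy of $K_{k+1}$ admits no $k$-partial $k$-coloring at all (each vertex has degree exactly $k$, so its color is forbidden on all $k$ of its neighbors, impossible with only $k$ colors), whereas $K_{k+1}$ minus an edge forces its two non-adjacent vertices to receive the \emph{same} color, and $K_{k+1}$ minus a path or a matching yields further rigid local patterns. Chaining such pieces --- keeping every constraint-carrying internal vertex at degree at most $k$, where the partial-coloring condition collapses to ``the color of the vertex is absent from its neighborhood'' and hence behaves exactly like proper coloring --- one obtains a connected gadget whose only freedom is the choice of two distinct colors at its ports. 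The hard-instance characterization is precisely what certifies that no ``slack'' at the few high-degree vertices can break this rigidity, and that the gadget is a hard instance in its own right with no spurious colorings.

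Correctness of the reduction then has the two expected directions. If $G$ is properly $k$-colorable by $c$, set each $c(x_v)$ accordingly and extend $c$ across every gadget using its guaranteed extendability; this works because the gadgets are designed so that a single core vertex can feed the \emph{same} color into all the gadgets incident to it without internal conflict. Conversely, a $k$-partial $k$-coloring of $G'$ is, by the hard-instance structure, forced to respect all gadget-internal inequalities, so in particular $c(x_u)\ne c(x_v)$ for every $uv\in E(G)$, and the restriction of $c$ to $\{x_v : v\in V(G)\}$ is a proper $k$-coloring of $G$. Polynomial size and polynomial construction time are clear, since each gadget has size depending only on the constant $k$.

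The step I expect to be the main obstacle --- and the reason the theorem is not routine even though partial coloring is a \emph{relaxation} of proper coloring --- is degree management inside the gadgets. A vertex of degree at most $k$ automatically behaves like a properly colored vertex, so if $G'$ had maximum degree at most $k$ the problem would collapse (by Brooks' theorem, componentwise, a graph with $\Delta\le k$ and $k\ge 3$ is $k$-colorable unless some component is $K_{k+1}$, which is checkable in polynomial time); hence $G'$ must contain vertices of degree larger than $k$, and those are exactly the vertices that enjoy the slack of being allowed to agree in color with up to $\deg(v)-k$ of their neighbors. The construction therefore has to route the color of a high-degree core vertex through amplification structure producing enough degree-exactly-$k$ ``copies'' of that color --- one per incident inequality gadget --- so that the slack lands only on harmless internal edges and never on a constraint that encodes an edge of $G$. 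Verifying that this amplifier is itself a hard instance, with a unique coloring pattern per input color, and checking the finitely many local coloring possibilities in each gadget, uniformly for all $k\ge 3$, is the real technical content; everything around it is bookkeeping.
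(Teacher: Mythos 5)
Your proposal matches the paper's proof in all essentials: the paper also reduces from proper $k$-coloring by replacing each edge $\{u,v\}$ with a gadget consisting of a $k$-clique whose vertices all have degree exactly $k$ (hence must be properly colored), with $u$ joined to $k-1$ of them --- precisely your $K_{k+1}$-minus-an-edge equality forcer --- and the one remaining clique vertex joined to $v$, which forces $c(u)\neq c(v)$. Your ``degree management'' worry is resolved exactly as you anticipate: all forcing happens at gadget-internal vertices of degree exactly $k$ (one per incident edge of $G$ carrying a copy of the core vertex's color), while the high-degree core vertices only need the coloring to extend, which the proper coloring of $G'$ obtained in the forward direction provides.
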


\paragraph*{Distributed Complexity.}

The problem of $2$-partial $2$-coloring on general graphs trivially admits an $\Omega(n)$ lower bound as $2$-partial $2$-coloring implies a proper $2$-coloring on paths which readily admits an $\Omega(n)$ round lower bound (refer to Section~7.5 of~\cite{pelegbook}).  We show that the problem of $k$-partial $k$-coloring of graphs which are $k$-partially $k$-colorable, admits an \(\Omega(n)\) round lower bound for every constant $k\geq 3$, demonstrating an exponential separation as we reduce the number of colors from $k+1$ to $k$.   
    
\begin{restatable}{thm}{thmPCLB}\label{thm:partial_coloring_lb}
	For every fixed integer $k\geq 3$ and for any $k$-partial $k$-coloring LOCAL algorithm $\mathcal{A}$ for $k$-partially $k$-colorable graphs, there exists an $n$-node $k$-partially $k$-colorable graph $G$ such that $\mathcal{A}$ requires $\Omega(n/k^3)$ rounds to color $G$.
\end{restatable}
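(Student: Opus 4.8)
The plan is to lift the classical $\Omega(n)$ lower bound for proper $2$-coloring of a path (which, as noted above, is exactly the $k=2$ case) to $k$-partial $k$-coloring, by encoding a $2$-coloring constraint into a chain of rigid gadgets and then arguing indistinguishability. I will build a ``token chain'': a path of $m$ identical gadgets $H_1,\dots,H_m$ glued along shared \emph{port edges} $e_0,\dots,e_m$, where each $e_i=\{p_i,q_i\}$ is an edge, and $e_{i-1},e_i$ are the input and output ports of $H_i$. Besides its two port edges, $H_i$ contains a $(k-2)$-clique $Z^i=\{z^i_1,\dots,z^i_{k-2}\}$ in which every vertex is adjacent to both endpoints $p_{i-1}$ and $q_{i-1}$ of the input port; in addition, $p_i$ and $q_i$ are each adjacent to all of $Z^i$ and to each other, $p_i$ is adjacent to $p_{i-1}$, and $q_i$ is adjacent to $q_{i-1}$. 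Degrees are arranged (with a modest amount of padding, which is what makes each gadget $O(k^3)$ vertices and produces the $k^3$ factor) so that $p_{i-1},q_{i-1},z^i_j,p_i,q_i$ all have degree at most $k$ inside $H_i$; consequently, by the characterization of hard instances of partial coloring in Section~\ref{section:hard instances of partial coloring}, every $k$-partial $k$-coloring restricted to these vertices must be a proper coloring, which is what gives us rigidity.

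The key structural claim is then that a gadget acts as a rigid ``swap'': in any $k$-partial $k$-coloring of the chain, write $\{A,B\}=\{c(p_{i-1}),c(q_{i-1})\}$, a genuine $2$-element set since $p_{i-1}q_{i-1}$ is an edge. Each $z^i_j$ is adjacent to all other $z^i_\ell$ and to $p_{i-1},q_{i-1}$, has degree exactly $k-1$, and so must differ from all its neighbors; hence the $c(z^i_j)$ are pairwise distinct and all avoid $A$ and $B$, i.e.\ $\{c(z^i_1),\dots,c(z^i_{k-2})\}=\{1,\dots,k\}\setminus\{A,B\}$ exactly. Now $p_i$ is adjacent to all of $Z^i$, to $p_{i-1}$, and to $q_i$, has degree exactly $k$, hence differs from all its neighbors; this forbids all $k-2$ colors in the complement of $\{A,B\}$ and also forbids $A=c(p_{i-1})$, forcing $c(p_i)=B$, and symmetrically $c(q_i)=A$. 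Thus the $2$-set carried by the port edges is invariant along the whole chain and is swapped at every gadget, so in any valid coloring $c(p_0)=c(p_m)$ if and only if $m$ is even. (A single-vertex port could never force such a relation: by symmetry of the color palette, any relation between two port vertices holding in all valid colorings must commute with every color permutation and hence be trivial; it is precisely the $2$-set, materialized as the complementary $(k-2)$-clique $Z^i$, that circumvents this.) One further checks that the chain is $k$-partially $k$-colorable: pick any $2$-set $\{A,B\}$, color the port edges alternately $(A,B),(B,A),\dots$, color each $Z^i$ with the complementary $k-2$ colors, and verify that every vertex meets its quota --- in particular the shared ports have degree $2k-1>k$ and see $k$ differently-colored neighbors --- with the two end ports handled by a small cap.

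For the lower bound proper, fix a supposed $k$-partial $k$-coloring LOCAL algorithm $\mathcal{A}$ running in $t$ rounds. Let $G$ be a token chain built from a prefix of $\ell$ gadgets and a suffix of $\ell$ gadgets joined directly (length $m=2\ell$), and $G'$ the token chain built from the same prefix and suffix with one additional gadget inserted between them (length $m+1=2\ell+1$); assign the identifiers so that the prefix gadgets, and the suffix gadgets, carry identical labels in $G$ and $G'$. Both graphs are $k$-partially $k$-colorable, so $\mathcal{A}$ must output a valid coloring on each. If $t$ is smaller than (essentially) $\ell$, then the $t$-neighborhood of $p_0$ is identical in $G$ and $G'$, so $\mathcal{A}$ outputs the same color $\alpha$ at $p_0$ in both; likewise the $t$-neighborhood of the last port vertex is identical, so $\mathcal{A}$ outputs the same color $\beta$ there in both (that vertex being $p_m$ in $G$ and $p_{m+1}$ in $G'$). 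But by the structural claim, in $G$ we need $\alpha=\beta$ (since $m$ is even) and in $G'$ we need $\alpha\ne\beta$ (since $m+1$ is odd) --- a contradiction. Hence $t=\Omega(\ell)=\Omega(m)$; since each graph has $n=\Theta(m k^3)$ vertices (padding with isolated vertices to reach an arbitrary target $n$, which cannot help $\mathcal{A}$), this is $t=\Omega(n/k^3)$ rounds. I expect the main obstacle to be the gadget design: producing a bounded-diameter, $\mathrm{poly}(k)$-size widget that is simultaneously (i) rigid enough to force the $2$-set invariant despite palette symmetry, (ii) not \emph{too} rigid, so that the high-degree ``glue'' port vertices still see enough differently-colored neighbors to satisfy the \emph{partial} (not proper) coloring constraint, and (iii) such that the resulting chain is genuinely $k$-partially $k$-colorable; verifying these three requirements simultaneously, for all $k\ge 3$, is where the real work lies.
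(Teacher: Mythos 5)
Your high-level strategy is the same as the paper's: build a long chain of rigid gadgets, propagate a coloring invariant along it, and compare two chains that agree near both ends but force incompatible outputs there. However, your concrete gadget does not deliver the rigidity you claim, and the gap is exactly at the central difficulty of this problem. First, as you describe it, each $z^i_j$ is adjacent to the $k-3$ other vertices of $Z^i$ and to $p_{i-1},q_{i-1},p_i,q_i$, so its degree is $k+1$, not $k-1$; a vertex of degree $k+1$ in a $k$-partial coloring may share its color with one neighbor, so the claim that the colors of $Z^i$ are pairwise distinct and equal to $[k]\setminus\{A,B\}$ does not follow. Second, and more fundamentally, the shared port vertices have degree $2k-1>k$ in the chain (as you yourself note), so \emph{nothing} forces the port edge $\{p_{i-1},q_{i-1}\}$ to be non-monochromatic: by Lemma~\ref{lemma:hard_instances_of_partial_coloring} an edge is only forced to be properly colored when one endpoint has degree at most $k$, and ``degree at most $k$ inside $H_i$'' is irrelevant --- only the degree in the whole graph matters. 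If $c(p_{i-1})=c(q_{i-1})$ the set $\{A,B\}$ collapses and the swap invariant dies. The same objection kills the step ``$p_i$ has degree exactly $k$, hence differs from all its neighbors'': $p_i$ is also the input port of $H_{i+1}$ and has degree $2k-1$.

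The missing idea is a mechanism that forces two adjacent vertices $u,v$ to receive different colors \emph{without} requiring $u$ or $v$ themselves to have degree at most $k$. The paper's edge gadget does precisely this: it attaches a $k$-clique of auxiliary vertices, each of degree exactly $k$ (so each is forced to be properly colored), wired so that $u$ must share its color with one designated clique vertex that is adjacent to $v$; the conclusion $c(u)\neq c(v)$ then holds no matter how large $\deg(u)$ and $\deg(v)$ are. With that tool the chaining tension disappears, and the paper replaces every edge of a ``path of $k$-cliques'' (consecutive $k$-cliques joined by complete-bipartite-minus-a-perfect-matching links) by such a gadget; the propagated invariant is then the full $k$-coloring of each clique rather than your $2$-set, and the two indistinguishable graphs differ in one anti-matching permutation in the middle rather than in length/parity. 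Your parity-of-length variant of the indistinguishability argument would also work once the forcing is repaired, but as written the rigidity claims --- the heart of the proof --- are unsupported, and repairing them essentially requires rediscovering the edge gadget.
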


\begin{table}[h]
\centering
\begin{tabular}{|c|c|c|c|}
\hline
\textbf{Model} & \textbf{Problem} & \textbf{Complexity} & \textbf{Source} \\
\hline
\multirow{2}{*}{Classical}
  & $k$-partial $(k+1)$-coloring & $O(n)$ time & ~\cite{Das} \\
  & $k$-partial $k$-coloring     & NP-complete & This paper \\
\hline
\multirow{2}{*}{LOCAL}
  & $k$-partial $(k+1)$-coloring & $O(\log^2 k \cdot \log n)$ rounds & ~\cite{Das} \\
  & $k$-partial $k$-coloring     & $\Omega(n)$ rounds (const.\ $k$) & This paper \\
\hline
\end{tabular}
\vspace{1em}
\caption{Complexity theoretic landscape of  $k$-partial coloring with $k$ and $k+1$ colors  in the classical and distributed LOCAL models.}
\label{tab:summary}
\end{table}


\section{Distributed \local model}\label{section:distributed local model}

The standard \textsf{LOCAL} model of distributed computing is a \emph{synchronous} model of distributed computing which was introduced in~\cite{Linial92}. The exposition of \local model here has been adapted from~\cite{Das2024}.  It comprises of an $n$-vertex communication network $G=(V,E)$. Each edge $\{u,v\}\in E$ is a bi-directional communication channel between $u$ and $v$. Each vertex gets assigned a unique identifier through the function $\id:V\rightarrow \{1,2,\ldots, n^d\}$ for some constant $d\geq 1$ which is unique to each communication network. Initially, each node only knows its identifier, the size of the \id, and potentially some input. The nodes communicate by exchanging messages along the edges of the graph. All nodes execute the same algorithm, which proceeds in synchronous rounds. In each round, every node  sends one message to each of its neighbors,  receives the messages from its neighbors, and performs some individual computation. After a certain number of rounds, every node outputs, and terminates. For example, let us consider the problem of proper $(\Delta+1)$-coloring which has been extensively studied in this model. Here, each node of the graph $G$ will initially have $\Delta$ as input and at the end of a \local coloring algorithm execution, would output the color of the node, which is a number from the palette $[\Delta+1]$.

The \local model assumes a perfect world scenario where the topology of the graph $G$ doesn't change over time and the communication channels (i.e. the edges) are perfect (i.e. the messages aren't dropped and they are delivered instantaneously). Each node $v\in V$ is a computer which is computationally unlimited. Also, there is no limit on the size of the message transmitted by the nodes.

 The complexity of a \local algorithm is measured by the number of synchronous rounds.

There is an equivalent definition of \local algorithm based on locality. Before describing the equivalent definition, we define the \emph{$t$-neighborhood} of a vertex $v$ in the graph $G$ as follows.
        \[
    \text{For $v\in V(G)$ and $t\geq 1$, }N_G^t[v] \coloneqq G[\{u\in V(G)\ \colon\ \dist_G(u,v)\leq t\}].
        \]
In other words, the $t$-neighborhood of $v$ in $G$ is the graph induced by the vertices of $G$ which are at distance at most $t$ from $v$ (in particular, $v \in N_G^t[v]$). A \emph{$t$-radius view} of $v$ is the $t$-neighborhood of $v$ along with the \id, input, and the initial state  of each  vertex in the neighborhood. 

Any $(t+1)$-round \local algorithm $\mathcal{A}$ is equivalent to each vertex computing its output independently provided given access to its $t$-radius view. Each node $v$ implements the \emph{gather algorithm} where at round $1$, node $v$  sends its \id, input, and state to each node $u\in N_G(v)$ and receives $\id(u)$, the input, and state of $u$. At the second round, it sends the collection of \id s, states, and inputs to its neighbors, collects the same from them, and computes $1$-radius view of $v$. At each round $i\geq 3$, node $v$ sends its  $(i-2)$-radius view  to each node $u\in N_G(v)$, receives $(i-2)$-radius view of  $u$, and trivially computes its $(i-1)$-radius view. Once $v$ computes its $t$-radius view, it ``simulates'' $\mathcal{A}$ offline  and compute its output. A comprehensive exposition of this equivalent definition can be found in the lecture notes~\cite{podc_allstars_ch8}.

\subsection{Proving Lower Bounds by Constructing Indistinguishable Graphs}

\sloppypar
The lower bound proofs based on indistinguishability graphs rely on the fact, that given two graphs $G_1$ and $G_2$, and assignment of \id s of their respective vertices such that there exist two vertices $v_1$ in $G_1$ and $v_2$ in $G_2$ sharing the same \id\  and $t$-radius view implies that any $t$-round \local algorithm executing on the vertices $v_1$ and $v_2$ cannot \textit{distinguish} between $G_1$ and $G_2$ and will produce the same output for for $v_1$ and $v_2$. This follows from the radius view based definition of the \local algorithm. The technique of using indistinguishability to prove \local lower bounds have been used extensively in the literature (for example, refer to ~\cite{Linial92}, section~7.5 of ~\cite{pelegbook},~\cite{KuhnMW2004}).

For designing lower bounds for our problem, we create two $n$-vertex graphs \Ga  and \Gb such that there exist two pairs of vertices $u_1,v_1 \in$ \Ga and $u_2,v_2 \in$ \Gb such that the following properties hold.
		\begin{enumerate}
			\item There exists an assignment of \id s for the vertices of \Ga and \Gb such that $u_1$ and $u_2$ receive the same \id\ and have the same $t$-radius view for every $t=o(n)$. The same holds for $v_1$ and $v_2$ as well.
			\item Any $k$-partial $k$-colorings $\chi_1$ and $\chi_2$ of \Ga and \Gb, respectively, satisfy that $\chi_1(u_1) = \chi_1(v_1)$ while $\chi_2(u_2)\neq \chi_2(v_2)$.
		\end{enumerate}

For $t=o(n)$, any $t$-round algorithm $\mathcal{A}$  executing on $u_1$, $u_2$, $v_1$ and $v_2$ cannot distinguish \Ga and \Gb and therefore, assigns the same color to $u_1$, $u_2$  and $v_1$, $v_2$ thus incorrectly coloring at least one of the two graphs. 

\section{Hard Instances of Partial Coloring}\label{section:hard instances of partial coloring}
In this section, we identify the structural properties that make a graph instance hard for $k$-partial $k$-coloring. The goal is to understand which parts of a graph force any partial coloring algorithm to behave like a proper coloring algorithm.

Given a graph~$G$ and integers~$k$ and~$c$, any algorithm for computing a $k$-partial $c$-coloring must implicitly address two subtasks:
\begin{enumerate}
    \item Identifying a subgraph~$G' \subseteq G$ which must be properly colored to satisfy the $k$-partial coloring condition --- this identification is solely dependent on $k$ and not on the number of colors~$c$.
    \item Assigning colors to $G'$ in a way that yields a proper coloring using~$c$ colors.
\end{enumerate}

The complexity of $k$-partial coloring is therefore inherently tied to the structure of this critical subgraph~$G'$. We formally characterize such subgraphs in the lemma below, which highlights the role of a specific graph parameter in determining where proper coloring becomes unavoidable.

To this end, we introduce the parameter $\Delta_E$ for a graph~$G = (V, E)$ as follows:
\begin{align*}
    \Delta_E \coloneqq \max\limits_{e=\{u,v\}\in E}\min\{\deg_G(u),\deg_G(v)\}.
\end{align*}

\begin{lemma}\label{lemma:hard_instances_of_partial_coloring}
    A $\Delta_E$-partial coloring of $G$ is a proper coloring of $G$.
\end{lemma}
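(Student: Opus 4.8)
The plan is to verify properness one edge at a time, exploiting the fact that $\Delta_E$ is, by construction, an upper bound on the smaller of the two endpoint degrees of \emph{every} edge. So I would fix an arbitrary edge $e = \{u,v\} \in E$ and, without loss of generality, assume $\deg_G(u) \le \deg_G(v)$, so that $\deg_G(u) = \min\{\deg_G(u),\deg_G(v)\}$. By the definition of $\Delta_E$ as a maximum of exactly this quantity over all edges, this immediately gives $\deg_G(u) \le \Delta_E$.

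Next I would invoke the defining property of a $\Delta_E$-partial coloring $\chi$ at the vertex $u$: the vertex $u$ must have at least $\min\{\Delta_E, \deg_G(u)\}$ neighbors colored differently from $\chi(u)$. Since $\deg_G(u) \le \Delta_E$, this minimum is just $\deg_G(u)$, so $u$ has at least $\deg_G(u)$ differently colored neighbors. But $u$ has only $\deg_G(u)$ neighbors in total, hence \emph{every} neighbor of $u$ — in particular $v$ — receives a color distinct from $\chi(u)$. Therefore $\chi(u) \neq \chi(v)$, i.e., $e$ is properly colored; since $e$ was arbitrary, $\chi$ is a proper coloring of $G$.

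There is essentially no real obstacle here: the whole argument hinges on the observation that at the lower-degree endpoint of any edge the partial-coloring ``slack'' $\min\{k,\deg\}$ collapses to the full degree, forcing all incident edges at that endpoint to be properly colored. The only point worth a remark is the degenerate case where $G$ has no edges, in which $\Delta_E$ is a maximum over the empty set and the statement holds vacuously; I would mention this in passing but not dwell on it.
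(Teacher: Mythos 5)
Your proof is correct and follows essentially the same route as the paper's: identify the lower-degree endpoint of each edge, note its degree is at most $\Delta_E$ by definition, and observe that the partial-coloring requirement $\min\{\Delta_E,\deg\}$ then collapses to the full degree, forcing all its incident edges to be properly colored. Your write-up is in fact slightly more explicit than the paper's about why $\deg_G(u)\le\Delta_E$ holds; nothing is missing.
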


\begin{proof}
    Consider an arbitrary $\Delta_E$-partial coloring $\gamma$ of $G$. For an edge $\{u,v\}\in E$, $\deg_G(u)\leq \Delta_E$  or $\deg_G(v)\leq \Delta_E$. Without loss of generality, let us assume that this holds true for vertex $u$. Therefore, by the definition of partial-coloring, each neighbor of $u$  must be colored properly with respect to $u$ by  $\gamma$, and hence, the edge $\{u,v\}$ is not monochromatic and $\gamma$ is a proper coloring of $G$.
\end{proof}

This lemma tells us that when $k = \Delta_E$, any valid $k$-partial coloring is in fact a proper coloring. Hence, the subgraph induced by edges realizing the maximum in $\Delta_E$ forms the core of difficulty for partial coloring—it forces proper coloring behavior.

These graphs thus form the \emph{hard instances} for $k$-partial $k$-coloring. As we will observe in the following sections (Observations~\ref{observation:proper_colorable} and~\ref{observation:hard_instances_for_distributed_complexity}), all our lower bound constructions for centralized and distributed settings are captured by this characterization.

\subsubsection*{Some Key Observations} 
\begin{itemize}
    \item For a given graph $G$, it is possible that  $\Delta_{E(G)} < \Delta(G)$. See Fig.~\ref{fig:nontrivial-example} for such an example. 

    In fact, the following holds: For every graph $G$, $\mathcal{K}(G)\leq \Delta_{E(G)}\leq \Delta(G)$ where $\mathcal{K}(G)$ is the degeneracy of the graph $G$. While the inequality $\Delta_{E(G)}\leq \Delta(G)$ follows immediately from the definitions, here is a simple argument that $\mathcal{K}(G)\leq \Delta_{E(G)}$. Partition $V(G)$ into two sets: $H = \{v\in V(G)\mid\deg(v)> \Delta_{E(G)})\}$ and $L = V(G)\smallsetminus H$. The set $H$ is an independent set: if $u, v \in H$ were adjacent, the vertices $u$ and $v$ would be incident to at least $\Delta_{E(G)}+1$ 
edges, contradicting the definition of $\Delta_E$. Now, order the vertices so that all vertices of $H$ appear first (in an arbitrary order), followed by all vertices of $L$ (also in an arbitrary order). In this ordering, each vertex has at most $\Delta_{E(G)}$ neighbors appearing later, and thus $\mathcal{K}(G) \leq \Delta_{E(G)}$.
    
 Therefore, the problems of $k$-partial $(k+1)$-coloring and $k$-partial $k$-coloring are generalizations of proper $(\Delta_E+1)$-coloring and 
proper $\Delta_E$-coloring, respectively. These in turn are non-trivial generalizations\footnote{
    A problem $\Pi$ is a \emph{generalization} of a problem $\Pi'$ 
if every instance of $\Pi'$ can be expressed as an instance of $\Pi$ such that 
any algorithm for $\Pi$ also solves $\Pi'$ with the same complexity.
    }  of proper $(\Delta+1)$-coloring and proper $\Delta$-coloring. 
    \item The $k$-partial $(k+1)$-coloring algorithm of~\cite{Das} implicitly provides  an algorithm for $(\Delta_E+1)$-coloring in $O(\log^2\Delta_E\cdot \log n)$ rounds by adapting the Ghaffari-Kuhn rounding framework~\cite{GhaffariK21}. Our results show that reducing even one color from the palette makes the problem ``hard'' in the classical model as well as in the LOCAL model.   
  
\end{itemize}

\section{Classical Complexity}\label{section:classical_complexity}

\begin{figure}[t]
    \centering
    \includegraphics[scale=1.8]{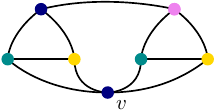}
    \caption{A nontrivial obstruction for $3$-partial $3$-coloring, showing an example of a graph with $\Delta_E = 3$ which does not admit a proper $3$-coloring, and therefore, a $3$-partial $3$-coloring. Also, observe that for this instance,  $\Delta = 4$ and therefore, $\Delta_E<\Delta$.}
    \label{fig:nontrivial-example}
\end{figure}

This section is dedicated to proving the following theorem.
\thmNPc*

We prove Theorem~\ref{theorem:NP-completeness} by reduction from the decision problem of  proper $k$-colorability of an input graph. It was shown to be NP-compelete for every constant $k\geq 3$~\cite{GareyJohnson1979}. Given an input graph $G$ of proper $k$-colorability, we transform $G$ into $G'$ by replacing each edge $\{u,v\}\in E(G)$ with the \textit{edge gadget} $e_{\{u,v\}}$ which we define as follows.  

\subparagraph{Edge Gadget.} The edge gadget $e_{\{u,v\}}$  is constructed by adding additional vertices $\{(u,v,1), (u,v,2),\allowbreak\ldots, \allowbreak(u,v,k)\}$ and edges between them such that these vertices form a $k$-clique. Moreover, the vertex $u$ is made adjacent to $(u,v,1), (u,v,2),\allowbreak\ldots, \allowbreak\text{ and }(u,v,k-1)$ and the vertex $v$ is adjacent to $(u,v,k)$. See Figure~\ref{fig:edge-gadget} for an example.  

\medskip

\begin{figure}[tb]
    \centering
    \includegraphics[scale=1.8]{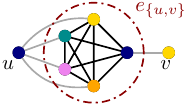}
    \caption{The edge gadget $e_{\{u,v\}}$ with $k=5$ (in red circle, connected with gray edges to the vertices $u$ and $v$). Vertices $u$ and $v$ are forced to take different colors.}
    \label{fig:edge-gadget}
\end{figure}

The transformation trivially runs in polynomial time. We now prove the following  property of the graph $G'$.

\begin{lemma}\label{lemma:coloring_end_points_of_edge_gadget}
	Any $k$-partial $k$-coloring $\gamma$ of $G'$ satisfies that $\gamma(u)\neq \gamma(v)$ for each edge $\{u,v\}\in E(G)$. 
\end{lemma}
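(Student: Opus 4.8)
The goal is to show that in the transformed graph $G'$, every edge gadget $e_{\{u,v\}}$ forces its two ``outer'' vertices $u$ and $v$ to receive distinct colors under any $k$-partial $k$-coloring $\gamma$. The natural strategy is to first compute $\Delta_{E(G')}$ restricted to the gadget and invoke Lemma~\ref{lemma:hard_instances_of_partial_coloring}, then argue combinatorially about proper colorings of the $k$-clique inside the gadget. First I would check the degrees of the $k$ clique-vertices $(u,v,1),\dots,(u,v,k)$ in $G'$: each of $(u,v,1),\dots,(u,v,k-1)$ has $k-1$ neighbors inside the clique plus one neighbor $u$, so degree $k$; and $(u,v,k)$ has $k-1$ clique-neighbors plus $v$, again degree $k$. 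Hence every edge of the $k$-clique joins two vertices of degree exactly $k$, so $\Delta_{E(G')} \ge k$. Since we are working with a $k$-partial coloring and every edge of the clique is incident to vertices of degree $\le k = \Delta_E$-locally, Lemma~\ref{lemma:hard_instances_of_partial_coloring} (applied to the clique, or more carefully: any vertex of degree $\le k$ must see all its neighbors colored differently) tells us that $\gamma$ restricted to $\{(u,v,1),\dots,(u,v,k)\}$ is a proper coloring of $K_k$, hence uses all $k$ colors, one per vertex.

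\textbf{Key steps in order.} (1) Establish that each clique-vertex has degree exactly $k$ in $G'$, so the $k$-partial condition forces all of its $k$ neighbors to be colored differently from it. (2) Conclude that the $k$ clique-vertices carry $k$ pairwise-distinct colors, i.e. a bijection with the palette $\{1,\dots,k\}$. (3) Now look at $u$: among the $k$ clique-vertices, $u$ is adjacent to exactly $k-1$ of them, namely $(u,v,1),\dots,(u,v,k-1)$, which together use $k-1$ of the $k$ colors; since $u$ itself (as a neighbor of each of these) must differ from each, $\gamma(u)$ must avoid all $k-1$ of those colors, forcing $\gamma(u)$ to equal the unique remaining color — which is precisely $\gamma((u,v,k))$. (Here I should double check the degree of $(u,v,i)$ for $i\le k-1$ is $\le k$ so that the ``neighbor must differ'' conclusion is valid for the edge $\{u,(u,v,i)\}$; indeed it is exactly $k$.) (4) Symmetrically, $v$ is adjacent only to $(u,v,k)$ among the clique, and $(u,v,k)$ has degree $k$, so $\gamma(v) \neq \gamma((u,v,k))$. (5) Combining (3) and (4): $\gamma(u) = \gamma((u,v,k)) \neq \gamma(v)$, which is the claim.

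\textbf{Main obstacle.} The one subtlety to be careful about is that $u$ and $v$ may have large degree in $G'$ (they are incident to one gadget per original edge, so $\deg_{G'}(u)$ can be much larger than $k$), so the $k$-partial condition at $u$ or $v$ itself does \emph{not} directly force anything about the edges leaving $u$ — we cannot argue ``$u$ differs from all its neighbors.'' The force comes entirely from the low-degree side: every clique-vertex and in particular $(u,v,i)$ has degree exactly $k$, so the constraint at $(u,v,i)$ forces $\gamma(u) \neq \gamma((u,v,i))$ and forces the clique to be rainbow-colored. So the argument must be phrased as ``the low-degree clique vertices impose constraints on $u$,'' not ``$u$ imposes constraints.'' Once that orientation is kept straight, steps (1)–(5) are just counting colors on $K_k$ and the proof is short; I would also note in passing (for use in the main NP-completeness proof) the converse direction — that a proper $k$-coloring of $G$ extends to a $k$-partial $k$-coloring of $G'$ by coloring each clique appropriately — though that is a separate lemma and not needed here.
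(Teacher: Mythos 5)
Your proposal is correct and follows essentially the same route as the paper: each gadget vertex $(u,v,i)$ has degree exactly $k$, so the $k$-partial condition forces it to be properly colored against all neighbors, making the $k$-clique rainbow and forcing $\gamma(u)=\gamma((u,v,k))\neq\gamma(v)$. Your explicit remark that the constraints come only from the low-degree clique vertices (not from $u$ or $v$, which may have large degree) is exactly the right way to read the paper's argument; the loose appeal to Lemma~\ref{lemma:hard_instances_of_partial_coloring} is unnecessary, but you immediately replace it with the correct direct degree argument, so nothing is missing.
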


\begin{proof}
	By the construction of the edge gadget $e_{\{u,v\}}$, for each $i\in [k]$, $\deg_{G'}((u,v,i)) = k$, therefore by the definition of $k$-partial coloring, the vertex $(u,v,i)$ must be properly colored with respect to all its neighbors. By the fact that the set $\{(u,v,1),\ldots,(u,v,k)\}$ forms a $k$-clique and that the vertex $u$ is adjacent to all vertices of the $k$-clique except $(u,v,k)$, the only possibility is that $\gamma(u) = \gamma((u,v,k))$. Since $(u,v,k)$ is adjacent to $v$ and has to be properly colored with respect to it, the claim immediately follows. 	
\end{proof}

The next claim proves the correctness of the reduction.

\begin{lemma}\label{lemma:correctness_of_NPC_reduction}
	The graph $G$ is properly $k$-colorable if and only if the graph $G'$ is $k$-partially $k$-colorable.
\end{lemma}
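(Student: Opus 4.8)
The plan is to prove the biconditional in Lemma~\ref{lemma:correctness_of_NPC_reduction} by establishing the two directions separately, using the edge gadget's structure together with Lemma~\ref{lemma:coloring_end_points_of_edge_gadget}.

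\textbf{Forward direction ($G$ properly $k$-colorable $\Rightarrow$ $G'$ is $k$-partially $k$-colorable).} First I would start from a proper $k$-coloring $c$ of $G$ and extend it to all of $G'$. The original vertices of $G$ keep their colors $c(v)$. For each edge $\{u,v\} \in E(G)$ with $c(u) \neq c(v)$, I need to color the $k$-clique $\{(u,v,1),\dots,(u,v,k)\}$ so that (a) the clique is properly colored (all $k$ colors used, one per vertex), (b) $(u,v,i)$ for $i \le k-1$ gets a color different from $c(u)$, and (c) $(u,v,k)$ gets a color different from $c(v)$. The natural choice is to set $\gamma((u,v,k)) = c(u)$ and assign the remaining $k-1$ colors $\{1,\dots,k\}\setminus\{c(u)\}$ bijectively to $(u,v,1),\dots,(u,v,k-1)$; since $c(v) \neq c(u)$, the vertex $(u,v,k)$ is properly colored against $v$, and each $(u,v,i)$ with $i\le k-1$ avoids $c(u)$ by construction. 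Then I would verify the $k$-partial condition vertex by vertex: each clique vertex has degree exactly $k$ and is properly colored against all $k$ neighbors, so it is fine; each original vertex $v$ has its full original neighborhood properly colored (inherited from $c$) plus possibly further clique neighbors, so it trivially has at least $\min\{k,\deg_{G'}(v)\}$ differently colored neighbors — in fact I should be slightly careful and note that $\deg_{G'}(v) = (k-1)\cdot(\text{number of edges where }v\text{ plays the }u\text{-role}) + (\text{number of edges where }v\text{ plays the }v\text{-role})$, but in all cases the already-guaranteed differently-colored neighbors suffice.

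\textbf{Reverse direction ($G'$ is $k$-partially $k$-colorable $\Rightarrow$ $G$ properly $k$-colorable).} This is immediate from Lemma~\ref{lemma:coloring_end_points_of_edge_gadget}: given a $k$-partial $k$-coloring $\gamma$ of $G'$, restrict it to $V(G)$. By that lemma, $\gamma(u)\neq\gamma(v)$ for every $\{u,v\}\in E(G)$, so $\gamma|_{V(G)}$ is a proper $k$-coloring of $G$ (it uses at most $k$ colors since $\gamma$ does).

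\textbf{Main obstacle.} The genuinely delicate point is the forward direction, specifically checking that the original vertices of $G$ still satisfy the partial-coloring requirement after the transformation, because their degrees blow up in $G'$ and the requirement $\min\{k,\deg_{G'}(v)\}$ could in principle be as large as $k$. The key observation that resolves this is that an original vertex $v$ of positive degree in $G$ has at least one $G$-neighbor $w$, and the edge gadget for $\{v,w\}$ contributes neighbors of $v$ in $G'$ that, under our chosen coloring, receive colors different from $c(v)$ — combined with the inherited proper coloring of $v$'s original $G$-neighbors, one checks there are always at least $\min\{k,\deg_{G'}(v)\}$ such neighbors. I would make this counting explicit: if $v$ has $G$-neighbors $w_1,\dots,w_d$, then among the gadget vertices $\{(v,w_j,i): i\le k-1\}$ (when $v$ is the $u$-endpoint) all $k-1$ of them avoid color $c(v)$, and similarly $(w_j,v,k) = c(w_j) \neq c(v)$ when $v$ is the $v$-endpoint; so even a single incident gadget already yields $k-1$ differently-colored neighbors, and a second incident vertex pushes this to at least $k$, handling every case. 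Once this bookkeeping is pinned down, both directions close cleanly and Theorem~\ref{theorem:NP-completeness} follows by composing this lemma with the polynomial-time transformation and NP-hardness of proper $k$-colorability.
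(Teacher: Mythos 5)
Your proof is correct and follows essentially the same route as the paper: the identical gadget coloring in the forward direction (give $(u,v,k)$ the color $c(u)$ and distribute $[k]\setminus\{c(u)\}$ over the rest of the clique) and the same immediate appeal to Lemma~\ref{lemma:coloring_end_points_of_edge_gadget} in the reverse direction. The only difference is that your ``main obstacle'' is not actually an obstacle: the extended coloring makes \emph{every} edge of $G'$ bichromatic, i.e.\ it is a proper $k$-coloring of $G'$, and any proper coloring is automatically a $k$-partial $k$-coloring, so the vertex-by-vertex bookkeeping at the original vertices can be skipped entirely (this is exactly how the paper closes the forward direction in one line). Relatedly, your final counting claim that ``a second incident vertex pushes this to at least $k$'' is not literally true when $v$ plays the $v$-endpoint role in both gadgets (it then has only $2$ differently-colored neighbors), but this is harmless since in that case $\deg_{G'}(v)=2$ and the required bound is $\min\{k,\deg_{G'}(v)\}$ --- again, properness of the extended coloring makes all such case analysis unnecessary.
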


\begin{proof}
	Let us assume that  $G$ is properly $k$-colorable and let $\gamma$ be such a coloring of $G$. We define a coloring $\gamma'$ on $G'$ by extending the coloring of $\gamma$ on the vertices of $e_{\{u,v\}}$ for each edge $\{u,v\}$ of $G$ as follows. Assign $\gamma'((u,v,k))\coloneqq \gamma(u)$. For the rest of the vertices, assign each vertex with a unique color from the palette $[k]\smallsetminus \{\gamma(u)\}$. The resultant coloring $\gamma'$ is a proper coloring and therefore, a $k$-partial $k$-coloring of $G'$.

	Now, let us assume that $G'$ is $k$-partially $k$-colorable. Let $\gamma'$ be such a coloring of $G'$. We can deduce a proper $k$-coloring $\gamma$ of $G$ from $\gamma'$ by simply assigning $\gamma(v) \coloneqq \gamma'(v)$ for all $v\in V$. The fact that $\gamma$ is a proper coloring of $G$ follows from Lemma~\ref{lemma:coloring_end_points_of_edge_gadget}.
\end{proof}

Before we complete the proof of Theorem~\ref{theorem:NP-completeness}, we make this simple observation.

\begin{observation}\label{observation:proper_colorable}
	The graph $G'$ satisfies that $\Delta_{E(G')} = k$ and therefore, any $k$-partial $k$-coloring is also a proper $k$-coloring of $G'$.   
\end{observation}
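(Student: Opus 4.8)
The plan is to compute $\Delta_{E(G')}$ directly from the construction of $G'$ and then invoke Lemma~\ref{lemma:hard_instances_of_partial_coloring}. First I would record the degrees in $G'$. Every gadget vertex $(u,v,i)$ lies in the $k$-clique $\{(u,v,1),\ldots,(u,v,k)\}$, which contributes $k-1$ to its degree, and is in addition adjacent to exactly one original vertex (namely $u$ when $i\leq k-1$, and $v$ when $i=k$). Hence $\deg_{G'}((u,v,i)) = k$ for every gadget vertex of every gadget. The original vertices $w\in V(G)$ may have arbitrary degree in $G'$ (each incident edge gadget contributes either $k-1$ or $1$ to $\deg_{G'}(w)$), but we will not need to control it.

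Next I would observe that every edge of $G'$ is incident to at least one gadget vertex. Indeed, the construction deletes each original edge $\{u,v\}\in E(G)$ and replaces it by the gadget $e_{\{u,v\}}$, so $G'$ contains no edge between two original vertices; its edges are exactly the clique edges inside the gadgets together with the edges joining an original vertex to a gadget vertex. Consequently, for every edge $e=\{a,b\}\in E(G')$ we have $\min\{\deg_{G'}(a),\deg_{G'}(b)\}\leq k$, since at least one of $a,b$ is a gadget vertex of degree exactly $k$. This bound is attained: as $k\geq 3$, each gadget contains clique edges $\{(u,v,i),(u,v,j)\}$ whose two endpoints both have degree $k$, so $\min\{\deg_{G'}(a),\deg_{G'}(b)\}=k$ on such an edge. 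Taking the maximum over all edges yields $\Delta_{E(G')}=k$.

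Finally, applying Lemma~\ref{lemma:hard_instances_of_partial_coloring} with $\Delta_E=\Delta_{E(G')}=k$, any $\Delta_{E(G')}$-partial coloring of $G'$---in particular any $k$-partial $k$-coloring---is a proper coloring of $G'$, which is a proper $k$-coloring since it uses the palette $[k]$.

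The only point needing a little care is the structural claim that $G'$ has no edge between two original vertices and that every gadget vertex has degree exactly $k$; both follow immediately from unwinding the definition of the edge gadget, so I do not expect any real obstacle here. (If $G$ has no edges the statement about $\Delta_{E(G')}$ is vacuous, so one may assume $G$ has at least one edge, which is the only interesting case for $k$-colorability.)
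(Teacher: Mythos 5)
Your proof is correct and follows exactly the route the paper intends: the paper states this observation without an explicit proof, but the degree computation $\deg_{G'}((u,v,i))=k$ already appears in the proof of Lemma~\ref{lemma:coloring_end_points_of_edge_gadget}, and the ``therefore'' is precisely an application of Lemma~\ref{lemma:hard_instances_of_partial_coloring}. Your extra care in checking that the original edges of $G$ are deleted (so that no edge of $G'$ joins two possibly high-degree original vertices) is a worthwhile detail that the paper leaves implicit but that is genuinely needed for $\Delta_{E(G')}\leq k$.
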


If there exists a polynomial time algorithm for deciding $k$-partial $k$-colorability we can construct a polynomial time algorithm for $k$-colorability, which is a contradiction.
Given an instance $G$ of proper $k$-colorability, we use the transformation described above to create an instance $G'$ in polynomial time. Then we can use the $k$-partial $k$-colorability decision algorithm on $G'$. By Lemma~\ref{lemma:correctness_of_NPC_reduction}, this gives the answer whether $G$ is properly $k$-colorable or not.

\section{Distributed Complexity}\label{section:distributed_complexity}
In this section, we prove that the problem of $k$-partial $k$-coloring  is ``global''. More precisely, we prove the following theorem. 
\thmPCLB*

To prove Theorem~\ref{thm:partial_coloring_lb}, we first present the following graph gadget which has also been used in~\cite{GS16} and~\cite{DBLP:Feuilloley} in the context of distributed certification.

\paragraph*{Path of Cliques.} For integers $k\geq 3$ and $\ell\geq 2$, the graph \lpath is defined on the vertex set $V=[k]\times [\ell]$ with respect to $\ell-1$ permutations $\tau_1$, $\tau_2$, $\ldots$, $\tau_{\ell-1}\ : [k]\rightarrow [k]$. Let the set of vertices $\{(1,i),\ldots, (k,i)\}$ be denoted by $V_i$ for each $i\in \ell$.  The graph induced on $V_i$ is a $k$-clique, denoted by $K_k^i$. For each $i\in [\ell-1]$, each vertex of  $V_i$ is adjacent to every vertex of $V_{i+1}$ except the perfect matching specified by the permutation $\tau_i$, i.e. for vertices $a$ and $b$ from $V_i$ and $V_{i+1}$ respectively, $\{(a,i),(b,i+1)\}\in E$ if and only if $b\neq \tau_i(a)$. We call the set of such edges between $V_i$ and $V_{i+1}$ as \emph{anti-matching} specified by $\tau_i$. See Figure~\ref{fig:pathofcliques} for an example. Also observe that $\lpath$ is uniquely determined by the permutations $\tau_1,\ldots, \tau_{\ell-1}$.

\begin{figure}[tb]
\centering
\includegraphics[scale=1.5]{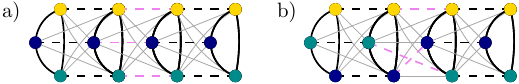}
\caption{Two $4$-path of $3$-cliques which differ only on the permutation $\tau_2$ and the effect of this difference on the coloring of the two graphs. Anti-matchings induced by $\tau_1, \tau_2, \tau_3$ are depicted with dashed lines, and the one induced by $\tau_2$ being in violet color.}      
\label{fig:pathofcliques}
\end{figure}

\begin{lemma}\label{lemma:lpath_is_k_colorable}
	For every integer $k\geq 3$ and $\ell\geq 2$,  every \lpath is properly $k$-colorable. 
\end{lemma}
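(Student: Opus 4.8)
## Proof Proposal for Lemma~\ref{lemma:lpath_is_k_colorable}

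The plan is to exhibit an explicit proper $k$-coloring of any \lpath by coloring the cliques one at a time, from $K_k^1$ to $K_k^\ell$, and arguing that at each step the permutation structure of the anti-matchings leaves exactly enough freedom to extend the coloring. I would parametrize a coloring of clique $K_k^i$ by a bijection $\sigma_i\colon [k]\to[k]$, where vertex $(a,i)$ receives color $\sigma_i(a)$; since $V_i$ is a $k$-clique, \emph{any} proper coloring of it is of this form (all $k$ colors appear, one per vertex), so the only real constraints come from the anti-matching edges between consecutive cliques.

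First I would analyze the constraint between $V_i$ and $V_{i+1}$. For $a\in[k]$ and $b\in[k]$, the vertices $(a,i)$ and $(b,i+1)$ are adjacent iff $b\neq\tau_i(a)$. So the coloring extends properly iff for every such adjacent pair $\sigma_i(a)\neq\sigma_{i+1}(b)$; equivalently, for each $a$, the color $\sigma_i(a)$ is allowed to reappear in $V_{i+1}$ \emph{only} at the vertex $\tau_i(a)$. Since $\sigma_{i+1}$ is a bijection, the color $\sigma_i(a)$ appears exactly once in $V_{i+1}$, namely at vertex $\sigma_{i+1}^{-1}(\sigma_i(a))$, and this is legal precisely when $\sigma_{i+1}^{-1}(\sigma_i(a)) = \tau_i(a)$, i.e. $\sigma_{i+1}(\tau_i(a)) = \sigma_i(a)$ for all $a$, which is the single equation $\sigma_{i+1} = \sigma_i\circ\tau_i^{-1}$. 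Hence the coloring between $K_k^i$ and $K_k^{i+1}$ is proper if and only if $\sigma_{i+1}=\sigma_i\circ\tau_i^{-1}$.

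Given this, the construction is immediate: pick $\sigma_1 = \mathrm{id}$ (or any bijection) and define $\sigma_{i+1} \coloneqq \sigma_i\circ\tau_i^{-1}$ for $i=1,\dots,\ell-1$. By induction each $\sigma_i$ is a bijection $[k]\to[k]$, so each clique is properly colored, and by the equivalence just established every anti-matching between consecutive cliques is properly colored as well. Since the only edges of \lpath are the clique edges and the anti-matching edges, the resulting assignment $(a,i)\mapsto\sigma_i(a)$ is a proper $k$-coloring, proving the lemma for all $k\geq 3$ and $\ell\geq 2$.

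I do not expect a serious obstacle here; the one place to be careful is the direction of composition in the characterization $\sigma_{i+1}=\sigma_i\circ\tau_i^{-1}$ — it is worth double-checking by tracing a single vertex through the anti-matching definition ($b\neq\tau_i(a)$) so the forced equality $\sigma_{i+1}(\tau_i(a))=\sigma_i(a)$ comes out with the correct argument, as the later indistinguishability argument will presumably track how the composed permutation $\tau_{\ell-1}^{-1}\circ\cdots\circ\tau_1^{-1}$ relates the coloring of the last clique to that of the first.
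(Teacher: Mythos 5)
Your proposal is correct and follows essentially the same route as the paper: the paper also colors the cliques sequentially (by induction on $\ell$), assigning each vertex $(i,j+1)$ the color of its unique non-neighbor $(\tau_j^{-1}(i),j)$ in the previous clique, which is exactly your recursion $\sigma_{j+1}=\sigma_j\circ\tau_j^{-1}$. Your ``if and only if'' framing is a slightly cleaner packaging (and in fact also yields the paper's subsequent color-propagation lemma), but the underlying argument is identical.
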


\begin{proof}
	We prove this by induction on $\ell$. For $\ell =2$, consider a graph $G$ which is a \textsc{$2$-path of $k$-cliques} specified by an arbitrary injective mapping $\tau_1$  on $[k]$. A proper $k$-coloring $\chi$ of $G$ can be specified as follows. The vertices of $K_k^1$ can be colored properly with $k$ colors where each vertex is assigned a unique color. The incomplete  coloring $\chi$ can be extended to $K_k^2$ by assigning $\chi((i,2))$ the same color as $\chi(({\tau_2}^{-1}(i),1))$ for each $i\in[k]$. This is the only possible color that the vertex $(i,2)$ can take as it is adjacent to every other vertex in $K_k^1$ except  $({\tau_2}^{-1}(i),1)$.   

	 Let the statement be true for $\ell = j$ for some $j>1$. Consider a graph $G$ which is an arbitrary \textsc{$(j+1)$-path of $k$-cliques} specified by arbitrary injective mappings  $\tau_1,\ldots, \tau_{j}$ on $[k]$. By induction hypothesis, $G[K_k^1 \cup\ldots\cup K_k^j]$ can be $k$-colored properly. Let the coloring be $\chi$. It can be extended to $K_k^{j+1}$ by essentially the same arguments as in the base case. The coloring $\chi((i,j+1))$ is assigned  the same color as $\chi(({\tau_j}^{-1}(i),j))$ for each $i\in [k]$. This is the only possible color that the vertex $(i,j+1)$ can take as it is adjacent to every other vertex in $K_k^j$ except $({\tau_j}^{-1}(i),j)$. 
\end{proof}

\begin{lemma}\label{lemma:color_propagation}
	For an \lpath $G$ specified by injective mappings $\tau_1,\ldots, \tau_{\ell-1}$ and a proper $k$-coloring $\chi$ of $G$, we have that $\chi((i,1)) = \chi((\tau_{1}\circ\ldots\circ\tau_{\ell-1}(i),\ell))$ for all $i\in [k]$ and $\ell\geq 2$. 
\end{lemma}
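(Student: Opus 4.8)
The plan is to prove the identity by tracking how colors propagate layer by layer along the path, relying only on the clique structure of each $V_i$ and the anti-matching definition of the cross edges. First I would record the elementary observation that, since each $V_i$ induces a $k$-clique and $\chi$ is a proper coloring using only the $k$ colors of $[k]$, the restriction $\chi|_{V_i}$ must assign $k$ pairwise distinct colors; equivalently, $\chi|_{V_i}$ is a bijection from the index set $[k]$ onto the color set $[k]$.

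The heart of the argument is a single-layer propagation claim: for every $i\in[\ell-1]$ and every $a\in[k]$ one has $\chi((a,i)) = \chi((\tau_i(a),i+1))$. To see this, recall that by the definition of the anti-matching, $(c,i)$ is adjacent to $(\tau_i(a),i+1)$ precisely when $\tau_i(c)\neq\tau_i(a)$, i.e.\ (using injectivity of $\tau_i$) precisely when $c\neq a$. Hence $(\tau_i(a),i+1)$ has exactly $k-1$ neighbors in $V_i$, and by the bijectivity of $\chi|_{V_i}$ these neighbors carry the $k-1$ distinct colors of $[k]\setminus\{\chi((a,i))\}$. Since $\chi$ is proper and only $k$ colors exist, $(\tau_i(a),i+1)$ is forced to take the one remaining color $\chi((a,i))$, proving the claim. (As $a$ ranges over $[k]$ this also re-derives that $\chi|_{V_{i+1}}$ is a bijection, consistent with the first observation applied to $V_{i+1}$.)

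With the single-layer claim in hand, I would iterate it: starting from $(i,1)$, one application gives $\chi((i,1)) = \chi((\tau_1(i),2))$, a second gives $\chi((\tau_2(\tau_1(i)),3))$, and after $\ell-1$ applications we reach $\chi\big(((\tau_1\circ\cdots\circ\tau_{\ell-1})(i),\ell)\big)$, where $\circ$ is read as ``apply $\tau_1$, then $\tau_2$, \ldots, then $\tau_{\ell-1}$'' so as to match the statement. Formally this is an induction on the number of layers traversed (equivalently on $\ell$): the subgraph of an \lpath induced on its first $m$ layers is itself an \textsc{$m$-path of $k$-cliques} carrying the inherited proper coloring $\chi$, the base case $m=2$ is exactly the single-layer claim for $i$, and the inductive step applies the single-layer claim once more to pass from layer $m$ to layer $m+1$.

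I do not expect a real obstacle here; the only points that require a little care are (i) spelling out why the $k$-clique on $V_i$ already forces $k$ distinct colors, so that ``the one remaining color'' in the propagation step is well defined, and (ii) being consistent about the composition order of the permutations $\tau_i$ in the inductive step (and noting that the base case $\ell=2$ coincides with the single-layer claim).
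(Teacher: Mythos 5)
Your proof is correct and follows essentially the same route as the paper: an induction over the layers whose engine is the single-layer forcing step that a vertex of $V_{i+1}$ is adjacent to all of the $k$-clique $V_i$ except one vertex and must therefore inherit that vertex's color. The only cosmetic difference is that the paper phrases this step backwards via $\tau_i^{-1}$ (locating the unique non-neighbor of $(i,j+1)$ in $V_j$) whereas you push colors forward via $\tau_i(a)$, and you spell out the bijectivity of $\chi|_{V_i}$ and the composition-order convention more explicitly than the paper does.
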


\begin{proof}
	We prove this by induction on $\ell$. For $\ell =2$, consider a vertex $(i,2)\in K_k^2$. As there is an anti-matching between $K_k^1$ and $K_k^2$, the vertex $(i,2)$ is only not adjacent to $({\tau_1}^{-1}(i),1)$ in $K_k^1$. As $K_k^1$ is a $k$-clique, the only possible color that $(i,2)$ can take, is $\chi(({\tau_1}^{-1}(i),1))$ and hence the statement follows. Let the statement be true for $\ell = j$ for some $j>2$. Consider an arbitrary \textsc{$(j+1)$-path of $k$-cliques} specified by arbitrary injective mappings  $\tau_1,\ldots, \tau_{j}$ on $[k]$. By the same arguments as in the base case, we have that $\chi((i,j+1)) =  \chi(({\tau_j}^{-1}(i),j))$ for each $i\in [k]$. By the induction hypothesis, the claim follows. 
\end{proof}

\paragraph*{Construction of Indistinguishable Graphs.}
	 For parameters $k$ and $\ell$ such that $\ell$ is even, we define the indistinguishability  graphs \Ga and \Gb as follows. Consider two \lpath $G_1$ and $G_2$ where $G_1$ is specified by injective mappings $\tau_i = \mathbb{1}$ (where $\mathbb{1}$ represents the identity mapping) for each $i\in [\ell-1]$ and $G_2$ is specified by $\tau_i = \mathbb{1}$ for each $i\in [\ell-1]\setminus \{\ell/2\}$ and $\tau_{\ell/2}$ which is an arbitrary injective mapping on $[k]$ except $\mathbb{1}$. The graph $G_1$  is transformed to \Ga (and $G_2$ is transformed to \Gb)  by performing the graph transformation as described in the NP-completeness reduction in Section~\ref{section:classical_complexity}, i.e. replace each edge with the $k$-edge gadget. More specifically, the following transformations are performed.
 \begin{enumerate}
 	\item For $(a,b)\in [k]^2$ such that $a\neq b$ and $i\in [\ell]$, replace $\{(a,i),(b,i)\}$ with $e_{\{(a,i),(b,i)\}}$.
 	\item For $(a, b)\in [k]^2$ and $i\in [\ell-1]$ such that $b\neq \tau_i(a)$, replace  $\{(a,i),(b,i+1)\}$ with $e_{\{(a,i),(b,i+1)\}}$.
\end{enumerate} 

Both the graphs \Ga  and \Gb have $O(\ell k^3)$ vertices (as each edge of $G_1$ (and $G_2$) accounts for $k$ vertices (of the $k$-clique) and for each $i\in [\ell]$, $K_i$ has $O(k^2)$ edges and for each $j\in [\ell -1]$ the anti-matching between $V_j$ and $V_{j+1}$ has $O(k^2)$ edges). 

\begin{observation}\label{observation:hard_instances_for_distributed_complexity}
	The graphs \Ga and \Gb are properly $k$-colorable. Moreover, $\Delta_{E(\text{\Ga})} = \Delta_{E(\text{\Gb})} = k$. Therefore, any arbitrary $k$-partial $k$-coloring is also a proper $k$-coloring of \Ga. The same holds true for \Gb.	
\end{observation}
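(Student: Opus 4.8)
The plan is to verify the two structural assertions separately and then read off the last sentence from Lemma~\ref{lemma:hard_instances_of_partial_coloring}. I will carry out everything for \Ga; the argument for \Gb is word-for-word the same, since the two graphs differ only in the single permutation $\tau_{\ell/2}$, which plays no role in any of the steps below.

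\emph{Proper $k$-colorability.} By Lemma~\ref{lemma:lpath_is_k_colorable} the underlying \lpath $G_1$ is properly $k$-colorable. Since \Ga is obtained from $G_1$ by replacing every edge with the $k$-edge gadget, i.e.\ exactly the transformation of Section~\ref{section:classical_complexity}, I would lift a proper $k$-coloring $\gamma$ of $G_1$ to \Ga precisely as in the forward direction of the proof of Lemma~\ref{lemma:correctness_of_NPC_reduction}: keep $\gamma$ on the original vertices, set the color of $(u,v,k)$ to $\gamma(u)$ in every gadget, and give the remaining clique vertices $(u,v,1),\dots,(u,v,k-1)$ distinct colors from $[k]\setminus\{\gamma(u)\}$. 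A routine check shows this is a proper coloring, so \Ga (and likewise \Gb) is properly $k$-colorable.

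\emph{The value of $\Delta_E$.} A short degree count does it. Every gadget vertex $(u,v,i)$ has $k-1$ neighbors inside its own $k$-clique and exactly one further neighbor (namely $u$ if $i\le k-1$, and $v$ if $i=k$), so it has degree exactly $k$. Every original vertex $(a,i)$ has degree at least $\deg_{G_1}((a,i))\ge 2(k-1)>k$ for $k\ge 3$: in $G_1$ it is adjacent to the $k-1$ other vertices of its own clique and, through an anti-matching, to $k-1$ vertices of at least one neighboring clique, and the gadget transformation never decreases the degree of an original vertex (each former neighbor $v$ still contributes at least one neighbor in the gadget $e_{\{u,v\}}$, and distinct $v$'s contribute distinct ones). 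Finally, by construction of the edge gadget no edge of \Ga joins two original vertices: the edges of \Ga are exactly the clique edges among $\{(u,v,1),\dots,(u,v,k)\}$, the edges $\{u,(u,v,i)\}$ for $i\in[k-1]$, and the edges $\{v,(u,v,k)\}$, so every edge is incident to a gadget vertex. Hence for every edge $\{x,y\}$ of \Ga we have $\min\{\deg(x),\deg(y)\}=k$ (the gadget endpoint realizes it, and the other endpoint has degree at least $k$), and therefore $\Delta_{E(\text{\Ga})}=k$; likewise $\Delta_{E(\text{\Gb})}=k$.

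\emph{Conclusion.} Combining the two facts with Lemma~\ref{lemma:hard_instances_of_partial_coloring}: since $\Delta_{E(\text{\Ga})}=k$, every $k$-partial coloring of \Ga is a proper coloring, and being a $k$-partial $k$-coloring it is in fact a proper $k$-coloring; the same holds for \Gb. I do not expect a genuine obstacle here; the only point deserving care is confirming that the transformation deletes every original edge, so that indeed every edge of \Ga and \Gb is incident to a gadget vertex of degree $k$ — otherwise an edge between two high-degree original vertices could push $\Delta_E$ above $k$ and break the characterization.
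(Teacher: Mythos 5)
Your proof is correct and follows essentially the same route as the paper: proper $k$-colorability via Lemma~\ref{lemma:lpath_is_k_colorable} and the forward direction of Lemma~\ref{lemma:correctness_of_NPC_reduction}, then $\Delta_E=k$ and the conclusion via Lemma~\ref{lemma:hard_instances_of_partial_coloring}. The only difference is that you spell out the degree count behind $\Delta_{E}=k$ (which the paper simply delegates to Observation~\ref{observation:proper_colorable}), correctly noting that every edge of the transformed graph is incident to a degree-$k$ gadget vertex because the original edges are replaced, not retained.
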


\begin{proof}
	The graphs $G_1$ and $G_2$ are properly $k$-colorable by Lemma~\ref{lemma:lpath_is_k_colorable}. As \Ga and \Gb are created by applying the same transformation on $G_1$ and $G_2$ respectively as in Section~\ref{section:classical_complexity}, the graphs \Ga and \Gb are $k$-partially $k$-colorable by Lemma~\ref{lemma:correctness_of_NPC_reduction}, properly $k$-colorable, and  $\Delta_{E(\text{\Ga})} = \Delta_{E(\text{\Gb})} = k$ by Observation~\ref{observation:proper_colorable}.
\end{proof}

\begin{fact}\label{fact:partial_coloring_lb_1}
For arbitrary $k$-partial $k$-colorings $\chi_1$ and $\chi_2$ of \Ga and \Gb respectively, the following hold:
\begin{enumerate}
	\item $\forall i\in [k],\ \chi_1((i,1)) = \chi_1((i,\ell))$ in \Ga\label{fact_indistinguishability_1},
	\item $\exists j\in [k],\ \chi_2((j,1)) \neq \chi_2((j,\ell))$ in \Gb\label{fact_indistinguishability_2}. 
\end{enumerate}  
\end{fact}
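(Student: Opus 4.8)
The plan is to deduce both parts from Lemma~\ref{lemma:color_propagation}, after first passing from the gadgeted graphs back to the underlying paths of cliques $G_1$ and $G_2$. The first step I would take is to note that although $G_1$ is \emph{not} a subgraph of \Ga (each of its edges has been inflated into a $k$-edge gadget), Lemma~\ref{lemma:coloring_end_points_of_edge_gadget} --- equivalently Observation~\ref{observation:hard_instances_for_distributed_complexity} --- guarantees that any $k$-partial $k$-coloring $\chi_1$ of \Ga assigns distinct colors to the endpoints of every gadget, i.e.\ to every pair $(a,i),(b,i)$ with $a\neq b$ and to every anti-matching pair $(a,i),(b,i+1)$ with $b\neq\tau_i(a)$. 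Consequently the restriction of $\chi_1$ to $V(G_1)=[k]\times[\ell]$ is a genuine proper $k$-coloring of $G_1$, and likewise $\chi_2$ restricted to $V(G_2)$ is a proper $k$-coloring of $G_2$; this is exactly what is needed to invoke Lemma~\ref{lemma:color_propagation}, which is stated for proper colorings of $\ell$-paths of $k$-cliques.

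For part~\ref{fact_indistinguishability_1}: since $G_1$ is specified by $\tau_i=\mathbb{1}$ for every $i\in[\ell-1]$, the composed permutation $\tau_1\circ\cdots\circ\tau_{\ell-1}$ is the identity, so Lemma~\ref{lemma:color_propagation} gives $\chi_1((i,1))=\chi_1((i,\ell))$ for all $i\in[k]$, which is the claim.

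For part~\ref{fact_indistinguishability_2}: in $G_2$ all of $\tau_1,\dots,\tau_{\ell-1}$ except $\tau_{\ell/2}$ are the identity, so the composition telescopes to $\tau_1\circ\cdots\circ\tau_{\ell-1}=\tau_{\ell/2}$, and Lemma~\ref{lemma:color_propagation} yields $\chi_2((i,1))=\chi_2((\tau_{\ell/2}(i),\ell))$ for all $i\in[k]$. I would then argue by contradiction: suppose $\chi_2((i,1))=\chi_2((i,\ell))$ for every $i$. Applying this at the index $\tau_{\ell/2}(i)$ and combining with the identity just derived gives $\chi_2((\tau_{\ell/2}(i),1))=\chi_2((i,1))$ for all $i\in[k]$. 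But $V_1=\{(1,1),\dots,(k,1)\}$ induces a $k$-clique, so in the proper coloring $\chi_2$ the map $i\mapsto\chi_2((i,1))$ is injective on $[k]$; hence $\tau_{\ell/2}(i)=i$ for all $i$, i.e.\ $\tau_{\ell/2}=\mathbb{1}$, contradicting the choice of $\tau_{\ell/2}$. Therefore some $j\in[k]$ satisfies $\chi_2((j,1))\neq\chi_2((j,\ell))$.

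Honestly, little genuine difficulty remains at this point: the real effort has been front-loaded into designing the edge gadget and the path of cliques and into proving Lemmas~\ref{lemma:coloring_end_points_of_edge_gadget} and~\ref{lemma:color_propagation}. The one step that requires care is the very first move --- one must resist conflating ``$\chi_1$ is proper on \Ga'' with ``$\chi_1$ is proper on $G_1$'', since the edges of $G_1$ have been destroyed in \Ga, and it is precisely the gadget lemma (not subgraph inclusion) that bridges the two. After that, part~\ref{fact_indistinguishability_1} is immediate, and part~\ref{fact_indistinguishability_2} reduces to the observation that a single surviving non-identity permutation $\tau_{\ell/2}$ is incompatible with the assumption that each index receives the same color at both ends of the path, because properness on the clique $V_1$ forces the $k$ colors there to be pairwise distinct.
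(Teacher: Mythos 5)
Your proposal is correct and follows essentially the same route as the paper: restrict $\chi_1,\chi_2$ to $V(G_1),V(G_2)$, use Lemma~\ref{lemma:coloring_end_points_of_edge_gadget} to see these restrictions are proper colorings of the underlying paths of cliques, and then apply Lemma~\ref{lemma:color_propagation}. Your contradiction argument for part~\ref{fact_indistinguishability_2} just spells out the injectivity-on-the-clique-$V_1$ step that the paper leaves implicit in ``by the same argument.''
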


\begin{proof}
Let $\chi_1'$ and $\chi_2'$ be $k$-colorings of $G_1$ and $G_2$ respectively which are defined as follows:
\begin{align*}
	\forall i\in [k], j\in [\ell]: \chi_1'((i,j)) = \chi_1((i,j)),\\
	\forall i\in [k], j\in [\ell]: \chi_2'((i,j)) = \chi_2((i,j)).
\end{align*} 
	By the same arguments as in the proof of Lemma~\ref{lemma:coloring_end_points_of_edge_gadget}, every pair of adjacent vertices is colored properly in $G_1$ and $G_2$ and therefore,  $\chi_1'$ and $\chi_2'$ are proper colorings of $G_1$ and $G_2$, respectively. By Lemma~\ref{lemma:color_propagation}, for each $i\in [k]$, $\chi_1'((i,1)) = \chi_1'((i,\ell))$ in $G_1$ and therefore, $\chi_1((i,1)) = \chi_1((i,\ell))$ in \Ga\  as well and hence item~\ref{fact_indistinguishability_1} holds. By the same argument and the fact that $\tau_{\ell/2}\neq \mathbb{1}$, there exists $j\in [k]$ such that  $\chi_2'((j,1)) \neq \chi_2'((j,\ell))$ in $G_2$ and therefore, $\chi_2((j,1)) \neq \chi_2((j,\ell))$ in \Gb\ and item~\ref{fact_indistinguishability_2} holds.
\end{proof}
 \vspace{2mm}
\begin{fact}\label{fact:partial_coloring_lb_2}
	For each $i\in [k]$, the vertex $(i,1)$ has the same $3\big(\frac{\ell}{2}-1\big)$-neighborhood in \Ga and \Gb. The same holds true for the vertex $(i,\ell)$.  More precisely, the following holds:
	\begin{align*}
		N_{\text{\Ga}}^{3(\ell/2-1)}[(i,1)] &= N_{\text{\Gb}}^{3(\ell/2-1)}[(i,1)],\\
		N_{\text{\Ga}}^{3(\ell/2-1)}[(i,\ell)] &= N_{\text{\Gb}}^{3(\ell/2-1)}[(i,\ell)].
	\end{align*}
\end{fact}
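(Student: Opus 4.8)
The plan is to prove the claimed equality of bounded-radius neighborhoods by showing that the only place where \Ga\ and \Gb\ differ is ``far'' from the vertices $(i,1)$ and $(i,\ell)$, and that this distance exceeds $3(\ell/2-1)$. Recall that $G_1$ and $G_2$ are identical except for the anti-matching between $V_{\ell/2}$ and $V_{\ell/2+1}$, which is governed by $\tau_{\ell/2}=\mathbb{1}$ in $G_1$ versus some $\tau_{\ell/2}\neq\mathbb{1}$ in $G_2$. After applying the edge-gadget transformation, \Ga\ and \Gb\ therefore agree on every vertex and edge \emph{except} on the gadgets $e_{\{(a,\ell/2),(b,\ell/2+1)\}}$ that realize edges of the anti-matching between layers $\ell/2$ and $\ell/2+1$; these gadgets are present/absent according to whether $b\neq\tau_i(a)$, and they sit ``between'' $V_{\ell/2}$ and $V_{\ell/2+1}$. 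So the first step is to pin down precisely the symmetric-difference set: it consists of those gadget vertices (and incident edges) for the anti-matching at position $\ell/2$, none of which lie strictly inside any clique layer.

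The second step is a distance estimate in the gadgetized graph. I would argue that in \Ga\ (and likewise \Gb), the subgraph on layers $V_1,\dots,V_{\ell/2}$ together with all their internal clique-gadgets and the anti-matching gadgets between consecutive layers $V_1/V_2,\dots,V_{\ell/2-1}/V_{\ell/2}$ is a ``separator-like'' region: any path from $(i,1)$ to a vertex of a layer-$\ell/2$-to-$\ell/2+1$ gadget must traverse layers $V_2,\dots,V_{\ell/2}$ in order. Each step from layer $V_j$ to layer $V_{j+1}$ costs distance at least $1$ if the two vertices are adjacent, but in the gadgetized graph adjacent original vertices are at distance $3$ (a path through two gadget vertices of a $k$-clique gadget: $u-(u,v,1)-\dots$; more precisely the endpoints $u,v$ of an original edge are at distance $3$ in $e_{\{u,v\}}$ since $u$ reaches the $(k-1)$-side of the clique, then one clique edge to $(u,v,k)$, then to $v$), and moving within a layer $V_j$ also costs $3$ per original clique-edge. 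Hence reaching any vertex that belongs to the differing region from $(i,1)$ costs at least $3$ for each of the $\ell/2-1$ layer transitions $V_1\!\to\!V_2,\dots,V_{\ell/2-1}\!\to\!V_{\ell/2}$, i.e.\ at least $3(\ell/2-1)$, and in fact the differing gadget vertices lie \emph{strictly beyond} layer $V_{\ell/2}$, so their distance from $(i,1)$ is $>3(\ell/2-1)$. Therefore $N^{3(\ell/2-1)}_{\text{\Ga}}[(i,1)]$ contains no vertex of the symmetric difference, and since \Ga\ and \Gb\ agree everywhere else, the two balls are literally equal. The argument for $(i,\ell)$ is symmetric, using that the differing region is between layers $\ell/2$ and $\ell/2+1$ and $(i,\ell)$ must reach it through layers $V_{\ell-1},\dots,V_{\ell/2+1}$, again $\ell/2-1$ transitions each of cost $\geq 3$.

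To make the distance bound clean, I would first prove an auxiliary claim: in \Ga, $\dist\big((x,j),(y,j')\big)\geq 3\,|j-j'|$ for any two original vertices $(x,j),(y,j')$, and more generally any vertex on a gadget joining layers $j$ and $j+1$ is at distance $\geq 3|j-1|$ from $(i,1)$ from the ``layer-1 side''. This follows by assigning to each vertex a ``layer coordinate'': original vertices of $V_j$ get coordinate $j$, internal clique-gadget vertices of $K^j_k$ get coordinate $j$, and the $k$ gadget vertices on an anti-matching gadget between $V_j$ and $V_{j+1}$ get coordinates in $\{j,j+1\}$ (the $k-1$ attached to the layer-$j$ endpoint get $j$, the one attached to the layer-$(j+1)$ endpoint gets $j+1$; or simply bound their coordinate within $[j,j+1]$). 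One checks that any edge of \Ga\ changes the layer coordinate by at most $\tfrac13$ \emph{on average over a length-3 detour} — more carefully, any path realizing an original-edge adjacency has length $3$ and changes layer coordinate by at most $1$, and edges internal to a single layer's gadgets don't change it at all — so a path of length $L$ changes the layer coordinate by at most $L/3$. Since the differing region has layer coordinate $\geq \ell/2$ (it is ``above'' layer $\ell/2$) while $(i,1)$ has coordinate $1$, any path from $(i,1)$ to it has length $\geq 3(\ell/2-1)+1$ — strictly more than $3(\ell/2-1)$, as the gadget vertices reach coordinate exactly $\ell/2$ only at their layer-$\ell/2$ endpoints, while the gadget vertices that actually differ (attached to the layer-$(\ell/2+1)$ side) have coordinate $\ell/2+1$, or at minimum one needs an extra edge past layer $\ell/2$. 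The main obstacle is getting this potential-function bookkeeping exactly right — in particular verifying that the edge-gadget transformation really does blow up all relevant distances by the factor $3$ and that no ``shortcut'' exists through a shared gadget vertex that is incident to two different layers — so I would state the layer-coordinate lemma precisely and check it against the definition of $e_{\{u,v\}}$ (the $k$-clique on $(u,v,1),\dots,(u,v,k)$ with $u$ attached to the first $k-1$ of them and $v$ attached only to $(u,v,k)$) before plugging in $\ell/2$.
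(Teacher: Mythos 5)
Your proposal is correct and follows essentially the same route as the paper: both localize the difference between \Ga\ and \Gb\ to the anti-matching gadgets between layers $\ell/2$ and $\ell/2+1$, and both use the fact that the edge gadget places original endpoints at distance $3$ to conclude that this differing region lies outside the radius-$3(\ell/2-1)$ ball around $(i,1)$ and $(i,\ell)$. Your layer-coordinate potential function is just a more explicit justification of the factor-$3$ distance blowup that the paper asserts directly.
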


\begin{proof}
By the construction of the graphs $G_1$ and $G_2$, we have that the graphs induced on $(V_1\cup \ldots\cup V_{\ell/2})$ by $G_1$ and $G_2$ are same. This is also true for the set $(V_{\ell/2+1}\cup\ldots V_{\ell})$. More precisely, the following holds:
\begin{align*}
	G_1[V_1\cup\ldots\cup V_{\ell/2}] &= G_2[V_1\cup\ldots\cup V_{\ell/2}],\\
	G_1[V_{\ell/2+1}\cup\ldots\cup V_{\ell}] &= G_2[V_{\ell/2+1}\cup\ldots\cup V_{\ell}].
\end{align*} 
The graphs $G_1$ and $G_2$ only differ in the anti-matching between $V_{\ell/2}$ and $V_{\ell/2+1}$. Therefore, for each $i\in [k]$, the vertex $(i,1)$ has the same $\big(\ell/2-1\big)$-neighborhood in $G_1$ and $G_2$. The same holds true for the vertex $(i,\ell)$, i.e.
	\begin{align*}
		N_{G_1}^{\ell/2-1}[(i,1)] &= N_{G_2}^{\ell/2-1}[(i,1)],\\ 
		N_{G_1}^{\ell/2-1}[(i,\ell)] &= N_{G_2}^{\ell/2-1}[(i,\ell)].
	\end{align*}

	By the construction of \Ga  and \Gb, as each edge $\{u,v\}$ is replaced consistently with the $k$-edge gadget $e_{\{u,v\}}$, and the distance between $u$ and $v$ in $e_{\{u,v\}}$ is $3$, the vertex $(i,1)$ (as well as $(i,\ell)$) has the same $3(\ell/2-1)$-neighborhood in \Ga and \Gb . 
\end{proof}
Now, we have all the tools to prove Theorem~\ref{thm:partial_coloring_lb}. 
\begin{proof}[Proof of Theorem~\ref{thm:partial_coloring_lb}]
	We prove this by contradiction. Let us assume that there exists an integer $k_0\geq 3$ and a \local algorithm $\mathcal{A}$ which $k_0$-partially colors every $n$-vertex $k_0$-partially $k_0$-colorable graph in $o(n/{k_0}^3)$ rounds. Then, there exists an integer $\ell_0>0$ such that for each even $\ell\geq \ell_0$, on every $O(\ell {k_0}^3)$-vertex $k_0$-partially $k_0$-colorable graph, the algorithm $\mathcal{A}$ terminates in at most $3\big(\ell/2 - 1\big)$ rounds.

Now let us create indistinguishability  graphs \Ga and \Gb for  $k_0$ and some even arbitrary $\ell\geq \ell_0$ as described in the section. For both \Ga and \Gb, the ids of the vertices are set as follows.
 \begin{enumerate}
	 \item For $a\in [k_0],\ i\in [\ell]$, $\id((a,i))\coloneqq (a,i)$.
 	 \item For $(a, b, j)\in [k_0]^3$ and $i\in [\ell-1]$ such that $b\neq \tau_i(a)$,\\ $\id(((a,i),(b,i+1),j)) \coloneqq ((a,i),(b,i+1),j)$.
 \end{enumerate}

 The \id s of the vertices of \Ga (and \Gb) are unique and can be encoded in $O(\log n)$ bits and therefore can be mapped uniquely to the set $\{1,2,\ldots, n^d\}$ for some constant $d$. Hence the identifiers are valid.

Let the colorings output by the algorithm $\mathcal{A}$ for \Ga and \Gb be $\chi_1$ and $\chi_2$, respectively. For each $i\in [k]$, the vertex with \id\  $(i,1)$ has the same $\big(3\ell/2-1\big)$-radius view in \Ga and \Gb (thanks to Fact~\ref{fact:partial_coloring_lb_2} and the assignment of \id s). The same is also true to the vertex with \id\   $(i,\ell)$.  Therefore, algorithm $\mathcal{A}$ at vertex with \id\ $(i,1)$ cannot distinguish between \Ga or \Gb after $\big(3\ell/2-1\big)$ rounds and outputs the same color for $(i,1)$ (and $(i,\ell)$) in \Ga and \Gb. More specifically, the following holds for vertices with \id s $(i,1)$ and $(i,\ell)$ for each $i\in[k_0]$:
		\begin{align*}
			\chi_1((i,1)) &= \chi_2((i,1)),\\
			\chi_1((i,\ell)) &= \chi_2((i,\ell)).
		\end{align*}
Item~\ref{fact_indistinguishability_1} of Fact~\ref{fact:partial_coloring_lb_1} implies that  $\chi_2((i,1)) = \chi_2((i,\ell))$ for each $i\in [k_0]$ in \Gb.  Howevere, this leads to contradiction, as due to item~\ref{fact_indistinguishability_2} of Fact~\ref{fact:partial_coloring_lb_1}, there must  exist $j\in [k_0]$ such that $\chi_2((j,1)\neq \chi_2((j,\ell))$. 
\end{proof}

\section{Conclusions}

We systematically investigated the complexity of $k$-partial $k$-coloring in both classical and distributed settings, revealing fundamental limitations of this generalized coloring paradigm. Our results demonstrate a sharp threshold between the tractable $(k+1)$-color case and the computationally hard $k$-color variant. We proved that deciding $k$-partial $k$-colorability is \textsf{NP}-complete for all constants $k \geq 3$, even for graphs where partial coloring reduces to proper coloring (as characterized by $\Delta_E$). This contrasts with the linear-time greedy algorithm for $(k+1)$-colorings.

In the \local model, we established an $\Omega(n)$-round lower bound for computing $k$-partial $k$-colorings, showing an exponential separation from the $O(\log^2 k \cdot \log n)$-round algorithms for $(k+1)$-colorings, using graph constructions based on \emph{paths of cliques} and \emph{indistinguishability arguments}. To this end, we extend the connection of this weaker version of coloring to proper coloring for proving locality as well as classical lower bounds.

\newpage

\bibliographystyle{abbrv}
\bibliography{references}

\end{document}